\title{\LARGE\bf  Target Defense against Sequentially Arriving Intruders:\\ Algorithm for Agents with Dubins Dynamics}
\author{Arman Pourghorban and Dipankar Maity
\thanks{The authors are with the Department of Electrical and Computer Engineering, University of North Carolina at Charlotte, NC, 28223, USA. 
Email: 
{\tt apourgho@charlotte.edu, dmaity@charlotte.edu}
}%
\thanks{
This research is supported by the ARL grant ARL DCIST CRA W911NF-17-2-0181
}
}
\newcommand{\rt}{\rho_{_T}}
\newcommand{\ra}{\rho_{_A}}
\newcommand{\xd}{\mathbf{x}_{D}}
\newcommand{\xa}{\mathbf{x}_{A}}
\newcommand{\R}{\mathbb{R}^2}
    \newcommand{\x}{\textbf{x}}
    \newcommand{\ro}{r_{_T}}
    \newcommand{\p}{\textbf{p}}
\newtheorem{lemma}{Lemma}
\newtheorem{remark}{Remark}
\newtheorem{assumption}{Assumption}
\begin{document}

\maketitle
\thispagestyle{empty}
\pagestyle{empty}
\begin{abstract}

We consider a variant of the target defense problem where a single defender is tasked to capture a sequence of incoming intruders. Both the defender and the intruders have non-holonomic dynamics.
The intruders' objective is to breach the target perimeter without being captured by the defender, while the defender's goal is to capture as many intruders as possible. After one intruder breaches or is captured, the next appears randomly on a fixed circle surrounding the target. 
Therefore, the defender's final position in one game becomes its starting position for the next.
We divide an intruder-defender engagement into two phases, \textit{partial information} and \textit{full information}, depending on the information available to the players.  We address
the capturability of an intruder by the defender using the notions of \textit{Dubins path} and \textit{guarding arc}.
We quantify the percentage of capture for
both finite and infinite sequences of incoming intruders. Finally, the theoretical results are verified through numerical examples using Monte-Carlo-type random trials of
experiments.

\end{abstract}


\section{Introduction}
Considering the growing capabilities of robotics for securing and supervising regions, considerable research has focused on guarding a target \cite{obstaclePEG,sensingPEG,Bhattacharya2010}. We study a perimeter-defense game between a defender and a team of intruders. The intruder team sends its members sequentially to breach the target, while the defender is tasked with guarding the target by intercepting them. 

Perimeter defense games, first introduced in \cite{isaacs1999differential}, are a variant of pursuit-evasion problems in which a defender aims to capture intruders before they breach a protected region \cite{velhal2022decentralized}. These problems arise naturally in applications such as coastline defense \cite{coastline_defense}, area patrolling \cite{area_monitoring}, area securing \cite{area_securing}, and border protection \cite{MultiAgentPerimeterDefenseReview}. Several works have investigated perimeter defense where the defender is restricted to the boundary of the target \cite{Guerrero2021, lee2021defendingperimetergroundintruder}, as well as more general cases where the defender can move freely in the environment \cite{bajaj2022competitiveperimeterdefenseconical, reach-avoidYan, maity2024cooperative}. Much of this literature has considered ``one-shot'' formulations, where all intruders are present from the start and the game concludes once each intruder has either been captured or reached the target. 

In contrast, sequential-arrival formulations \cite{bajaj, bajaj2022competitive,pourghorban2023target} account for intruders that arrive over time, producing a repeated series of defender-intruder engagements. These works, however, almost exclusively rely on simplified point-mass dynamics, which allow instantaneous heading changes and yield symmetric, circular reachability sets (i.e., the Apollonius circle \cite{dorothy2024one}). Such assumptions facilitate analysis but overlook fundamental kinematic constraints present in real robotic platforms. In particular, ground vehicles and aerial robots exhibit non-holonomic motion with bounded curvature, meaning that their reachable sets, dominance regions, and engagement outcomes differ qualitatively from holonomic models.

Moreover, many existing studies \cite{bajaj,reach-avoidYan} assume intruders have no sensing capability, thereby limiting their ability to adapt their strategies on-the-fly. In contrast, our prior investigations \cite{pourghorban2022target,pourghorban2023targetspie} demonstrated that sensing fundamentally changes intruder behavior and defender-intruder interactions. Building on this perspective, the present work examines sequential perimeter defense where both intruders and the defender have \textit{limited sensing regions} and \textit{non-holonomic dynamics}. This combination introduces new strategic and analytical challenges, as the defender must simultaneously reason about sensing asymmetries, curvature-constrained motion, and the long-term consequences of sequential arrivals.
\begin{figure}
    \centering
    \includegraphics[trim = 0 100 0 0, clip, width = 0.3 \textwidth]{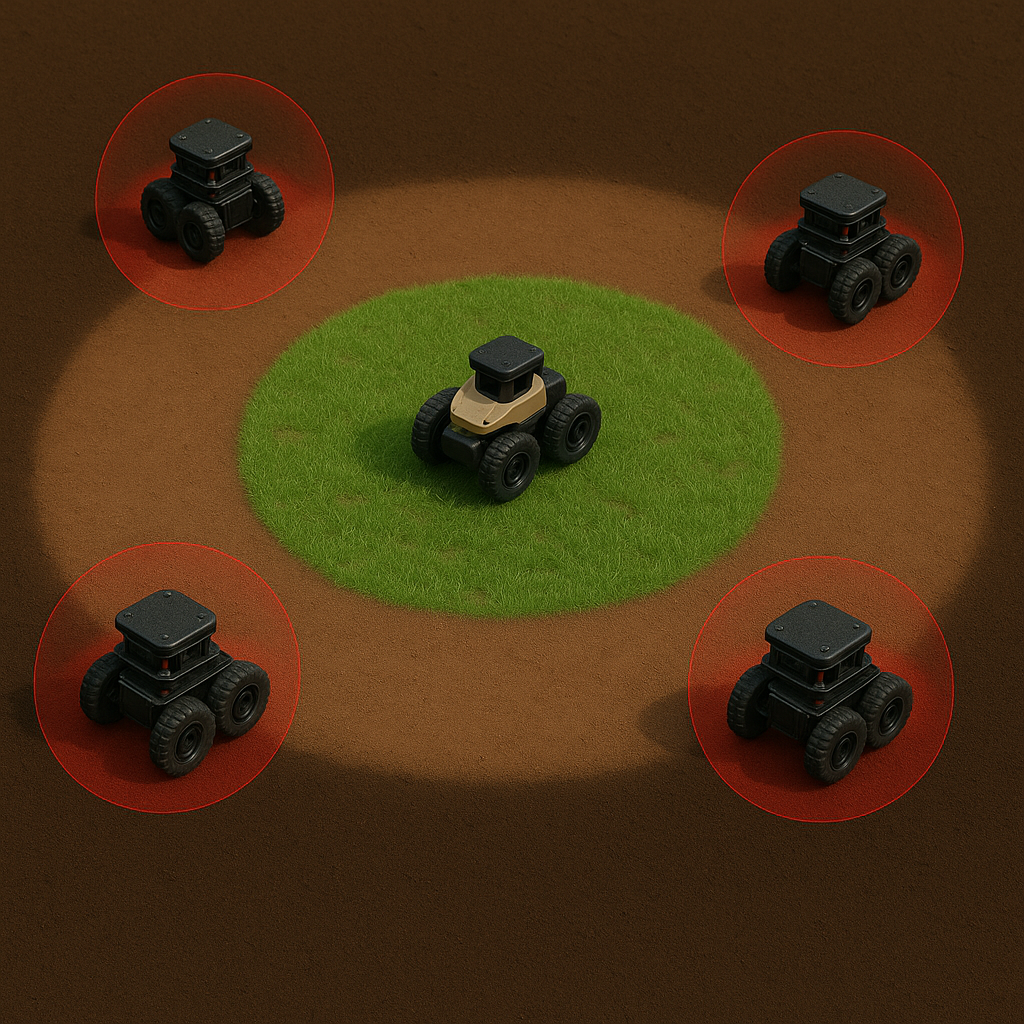}
    \put (-148, 70 ) {\line(1,0) {31}} 
    \put (-135, 75) {$\rt$}
    \caption{\colorbox{green!30}{\textcolor{green!30}{gh}}: Target region with radius $\ro$, \colorbox{red!20}{\textcolor{red!20}{gh}}: intruders sensing region with radius $\ra$, \colorbox{orange!60!black!25}{\textcolor{orange!60!black!25}{gh}}: Target's sensing anulus with radius $\rt$. \textit{This figure was generated using AI-assisted image synthesis (OpenAI DALL·E).}}
    \label{fig:introPic}
    \vspace{-6 pt}
\end{figure}
\par
Sequential intrusion is not merely a modeling convenience but reflects both practical and theoretical defense contexts. In real-world security scenarios, adversaries often stagger their intrusion attempts to reduce the chance of detection, rather than coordinating large-scale simultaneous breaches; such sequential tactics have been documented, for example, in studies of smuggling and cross-border infiltration \cite{dandurand2020migrant}. From a theoretical standpoint, sequential attack models are also well established in the security-games literature, where adversaries launch repeated or staged attacks on defended perimeters and defenders must adapt their strategies accordingly \cite{de2018facing}. This perspective highlights that sequential intrusion is both practically motivated and analytically distinct: each engagement’s terminal configuration directly shapes the initial conditions of the next, requiring defensive strategies that account not only for immediate captures but also for long-term positioning across a sequence of arrivals.
\par

Intruders arrive randomly on a circle surrounding the target and attempt to breach the target sequentially, with only one intruder active at any given time (see Fig.~\ref{fig:introPic}). Unlike holonomic settings, non-holonomic dynamics reshape the structure of engagements: they alter the feasible configurations for capture, constrain the set of favorable strategies, and introduce new geometric phenomena such as the emergence of capture circles. These features make the defender’s long-term strategy fundamentally different from point-mass formulations.

Due to the sequential arrival of intruders, the game becomes a series of \textit{one-on-one engagements} where the next begins immediately after the previous ends. We analyze each game by decomposing it into two phases: a \textit{partial information phase}, where only one agent can sense the other, and a \textit{full information phase}, where both agents are mutually aware. The defender’s (intruder’s) objective is to maximize (minimize) the percentage of captured intruders over the sequence.

The main contributions of this paper are:  
(i) Formulating and solving a sensing-limited target defense game against sequentially incoming intruders with non-holonomic dynamics,  
(ii) Characterizing intruder \textit{capturability} using Dubins-path reachability and introducing the notions of \textit{guarding arcs} and \textit{capture circles},  
(iii) Analytically computing capture percentages for both finite and infinite sequences of intruder arrivals using a Markov chain abstraction, and  
(iv) Validating the theoretical results through Monte Carlo simulations.  

The rest of the paper is organized as follows: We formulate the problem in \Cref{sec:ProbFormulation}, and in \Cref{sec:background} we outline the key assumptions and present background material. The two phases of the one-vs-one games—\textit{partial information} and \textit{full information}—are discussed in \Cref{sec:partInfo} and \ref{sec:fullInfo}, respectively. The analysis of the entire sequence of arrivals, including capture percentages for finite and infinite cases, is presented in \Cref{sec:GameAnalysis}. Numerical results are discussed in \Cref{sec:Simu}, and we conclude in \Cref{sec:Conclusion}.


\section{Problem Formulation} \label{sec:ProbFormulation}

 We consider a target guarding problem in ${\mathbb{R}^2}$ where a single defender is tasked to protect a circular target region $\mathbf{R}_{T}=\{\x \in {\mathbb{R}^2}\ | \ \|\x\| \leq r_{_T}\}$ from an incoming \textit{sequence} of intruders, as schematically shown in Fig.~\ref{fig:introPic}.
 The target is equipped with a sensing annulus of radius $\rt$, hereafter referred to as the \textit{Target Sensing Region} (TSR). 
 The TSR enables the defender to detect the presence of an intruder within this region.
The intruder is equipped with its own sensing capability, allowing it to sense its surrounding within a distance of $\ra$ (c.f., Fig.~\ref{fig:introPic}).

 The intruders appear \textit{sequentially} on the boundary of the TSR. 
 Each one-on-one game between the defender and an intruder either ends in a \textit{breach} of the target or in the \textit{capture} of the intruder.
 The next intruder appears on the TSR boundary immediately after the current game ends. 
Each intruder  appears \textit{randomly} on the TSR boundary, with the arrival process having a \textit{uniform probability} over the TSR boundary and being \textit{independent} of the previous arrivals.
Due to the randomness of intruder arrivals, the number of captures (or equivalently, the capture percentage) becomes a random variable.
Our objective in this work is to synthesize a defender strategy to maximize the expected capture percentage.

Let $\xa(t),\xd(t) \in \R$ denote the positions of an intruder and the defender at time $t$. 
Let $\psi_A(t)$ and $\psi_D(t)$ denote their heading angles at that time. 
The joint state of the agents are denoted by $\xi_A = [\xa^\intercal, \psi_A]^{^\intercal}$ and $\xi_D = [\xd^\intercal, \psi_D]^{^\intercal}$, which follow the dynamics
\begin{align} \label{eq:dynamics}
    \dot\xi_A = \begin{bmatrix}
        \nu_A \cos(\psi_A)\\
        \nu_A \sin(\psi_A)\\
        \omega_A
    \end{bmatrix}, \qquad  
    \dot\xi_D = \begin{bmatrix}
        \nu_D \cos(\psi_D)\\
        \nu_D \sin(\psi_D)\\
        \omega_D
    \end{bmatrix},
\end{align}
where $\nu_D$ and  $\omega_D$  ($\nu_A$ and $\omega_A$) are the defender's (intruder's) linear and angular velocities, respectively.
In this problem, we assume that the defender and the intruder move with the maximum linear speeds of $1$ and $\nu$, respectively, meaning, $|v_D(t)| = 1$ and $|v_A(t)| = \nu$ for all $t$. 
Additionally, we assume that $\nu < 1$, indicating that the defender is faster. 
Here $\omega_A$ (similarly, $\omega_D$) denotes the control action of the intruder (similarly, defender). 

\par
Upon appearing on the TSR boundary, the intruder moves radially toward the target center with maximum speed until it detects the defender.  The defender captures the intruder if $\|\xd(t)-\xa(t) \|=0$ and $\| \xa(t)\|> r_{_T}$. However, the intruder breaches the target if $\|\xa(t) \|<r_{_T}$ and $\|\xa(t)-\xd(t) \|>0$.
The intruder being equipped with its own sensor can detect the defender only if the defender is within a distance of $\ra$ or less. 
Leveraging this sensing capability, the intruder can identify the optimal breaching point on the target, escape the TSR without being captured, or evade the defender for a period before eventual capture. This evasive maneuver is a crucial capability for the intruder, as it compels the defender to pursue and capture the intruder in a location that is disadvantageous for the defender to begin the next game. Thus, while getting captured, each intruder can maximize the likelihood of winning for the next intruder, which would not have been possible if $\ra =0$.

\section{Background and Assumptions} \label{sec:background}
\subsection{Dubins Path}

 \begin{figure}[t]
     \centering
    { \includegraphics[trim = 207 280 190 275, clip,width=0.23\textwidth]{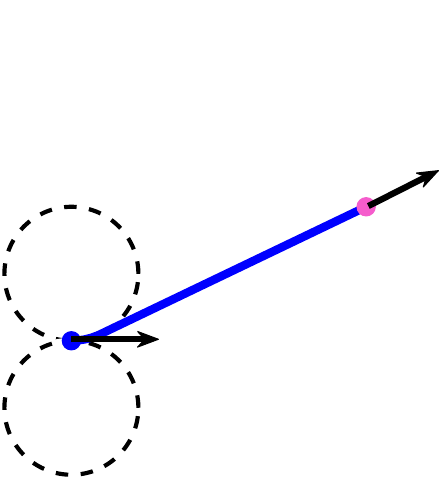}}
    \put (-108,27) {$\x_0, \psi_0$}
    \put (-24,66) {$\x_f$}
     \centering
    { \includegraphics[trim = 207 280 190 275, clip,width=0.23\textwidth]{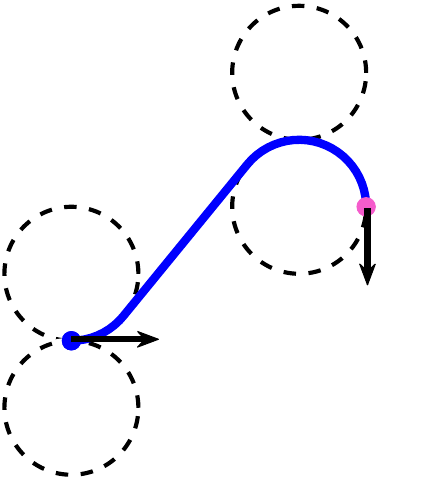}}
    \put (-108,27) {$\x_0, \psi_0$}
    \put (-50,73) {$\x_f, \psi_f$}
      \caption{Left: Dubins path when the final orientation $\psi_f$ is not specified. Right: Dubins path when the final orientation $\psi_f$ is specified.
      The arrows indicate the agent's initial and final heading angles. The dashed circles are the turning circles with radius $R$ representing the minimum radius of curvature that the agent must maintain during its turns.} 
      \label{fig:dubins_path}
    \end{figure}
The Dubins path is a mathematical model that finds the shortest path between two points in a plane while imposing constraints on the curvature and requiring continuous forward motion.
Formally, a Dubins path is defined as the shortest path between two configurations $\xi_0=[\x_0^\intercal, \psi_0]^{^\intercal}$ and $\xi_f=[\x_f^\intercal, \psi_f]^{^\intercal}$. Given a non-holonomic  vehicle constrained by a turning radius $R=\nicefrac{\nu}{\omega}$ the Dubins path is composed of segments from the set $\{C,S\}$, where $C$ represents a circular arc trajectory segment with constant radius $R$ and $S$ represents a straight line trajectory segment; see   \cite{meyer2015dubins,buzikov2021time,li2022auction} for more details. 

The agent's path depends on the relative positions and orientations of the start and goal configurations. 
When the final heading angle $\psi_f$ is specified, the path consists of three components: a circular arc, a straight-line segment, and a final circular arc. 
However, if the final orientation $\psi_f$ is not specified, meaning the agent is free to arrive at the destination with any heading, the path consists of only two components: a circular arc and a straight-line segment, see Fig.~\ref{fig:dubins_path}.

\subsection{Finite Time Reachability and Dominance Region}
Given an initial configuration  ${\xi_A}_0 = [ {\xa}_0^\intercal,{\psi_A}_0]^\intercal$  of the intruder, we compute the set of points that the intruder can reach in a finite time $T$. This \textit{reachable set}  is denoted by $\mathcal{R}({\xi_A}_0,T)$:
\begin{align}
    \mathcal{R}({\xi_A}_0,T)= \{\x \in \R:  &~~ \exists ~u_A(\cdot) \text{ and } \exists ~t \in [0,T] \nonumber \\
    &\quad\text{s.t. } \ {\xi_A}(0)={\xi_A}_0,~~~ \xa(t)=\x \},
\end{align}
where $u_A(.)$ is an admissible control input for the intruder. 
This reachability set can be efficiently computed using existing tools such as \cite{bansal2017hamilton}. 
In Fig.~\ref{fig:reachable_set}, we show a few examples of such reachable sets by varying the linear and angular speeds. 

When the players can instantly change their heading (i.e., the control variable is $\psi$ instead of $\omega$), such as the scenarios in \cite{english2021defender,manoharan2023nonlinear,lai2021reach}, the reachable set $\mathcal{R}({\xi_A}_0,T)$ becomes a circle with center ${\xa}_0$ and radius $\nu T$, where $\nu$ is the maximum allowed linear speed of the intruder.
 \begin{figure}
     \centering
    { \includegraphics[trim = 205 315 158 270, clip,width=0.2\textwidth]{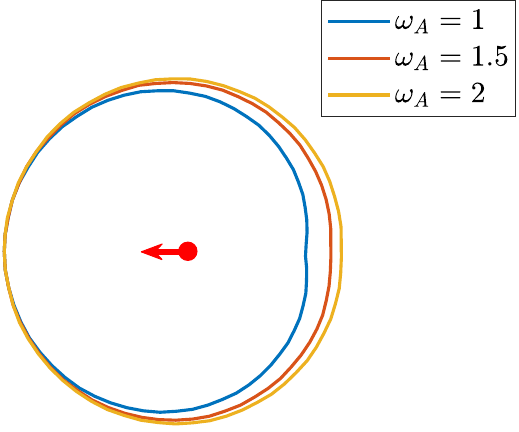}}
     \centering
    { \includegraphics[trim = 205 315 158 270, clip,width=0.22\textwidth]{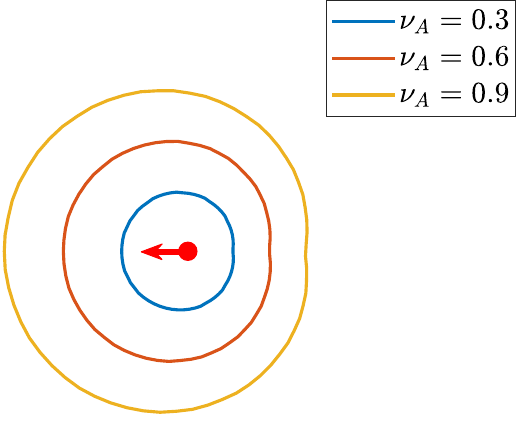}}
      \caption{The reachable set $\mathcal{R}(\xa,\psi_A,T)$ for $T=5$ is presented. Left: For $\nu_A=0.9$, $\mathcal{R}({\xa}_0,{\psi_A}_0,T)$ is shown by varying $\omega_A=1, 1.5$ and $2$. Higher value of $\omega_A$ corresponds to a larger region.  Right: For $\omega_A=1$, $\mathcal{R}({\xa}_0,{\psi_A}_0,T)$ is shown by varying $\nu_A=0.3, 0.6$ and $0.9$.}
      \label{fig:reachable_set}
    \end{figure}

While $\mathcal{R}({\xi_A}_0,T)$ provides the reachable set of the intruder, all such points may not be reachable in presence of the defender. 
Given initial configurations ${\xi_A}=[\xa^{^\intercal}, \psi_A]^{^\intercal}$ and $\xi_D=[\xd^{^\intercal},  \psi_D]^{^\intercal}$ of the agents, we now compute the set of points that the intruder can reach before the defender, i.e., the  \textit{dominance region} of the intruder: 
\begin{align}
    \mathcal{C}(\xi_A,\xi_D)= \{\x \in \mathbb{R}^2: \exists \  u_A(.) \ \text{s.t. } \forall \ u_D(\cdot) \ t_A(\x)\le t_D(\x)\},
\end{align}
where $t_A(\x)$ and $t_D(\x)$ are the minimum time for the intruder and defender to reach the point $\x$:
\begin{align*}
    t_i(\x) = \inf \{ t ~:~  \x \in \mathcal{R}(\xi_i, t)\}, \qquad i =A,D.
\end{align*}
In the special case where the agents can instantaneously change their heading angles (i.e., $\psi$ is the control variable instead of $\omega$), the dominance region $ \mathcal{C}(\xi_A,\xi_D)$ becomes a circle with radius $\gamma\|\xa -\xd\|$ and center at $\alpha \xa - \beta \xd$, where $\alpha = \nicefrac{1}{1-\nu^2}$, $\beta = \nu^2 \alpha$ and $\gamma = \nu\alpha$.
This circle is known as the \textit{Apollonius circle} \cite{dorothy2024one}.

\subsection{Head-On Engagement}

As the name suggests, a head-on engagement happens when the agents are heading directly to each other. 
More formally, given the configurations $\xi_D$ and $\xi_A$, a head-on engagement requires $\psi_D = \angle \hat{\x}_{DA} = \psi_A \pm \pi $, where we define $\hat{\x}_{DA} := \nicefrac{(\xa-\xd)}{\| \xa-\xd\|}$ to be the unit relative position vector, and, for any vector $u \in \R$, we define $\angle u = \arctan2([u]_2, $ $ [u]_1 )$, with $[u]_j$ being the $j$-th component of $u$.

\subsection{Parametric Assumptions} \label{sec:assumptions}
\begin{assumption}
    Parameters $\nu, \omega_A,\omega_D, \ra,\rt,r_{T}$ ensure that there exists configurations $\xi_A$ and $\xi_D$ such that $\|\xa-\xd\| = \ra$ and $\mathcal{C}(\xi_A,\xi_D) \cap \mathbf{R}_T = \emptyset$ and $\max_{\x \in \mathcal{C}(\xi_A,\xi_D)} \|\x\| \le \rt + \ro - \ra. $ 
\end{assumption}

This assumption is necessary to ensure that there exists at least one strategy for the defender to capture the intruder inside the TSR.
Without this assumption, every intruder will either be able to breach the target or escape out of the TSR.
This assumption, specifically the $\max_{\x \in \mathcal{C}(\xi_A,\xi_D)} \|\x\| \le \rt + \ro - \ra$ part, also prevents a deadlock situation where the defender is sensed by an intruder located outside the TSR.

\begin{assumption} 
    Let $\hat{\rho}_{_T} :=\rt + \ro - \ra$ and $m:= \nicefrac{2\hat{\rho}_{_T}\omega_D}{(\hat{\rho}_{_T}^2-1)}$. The parameters must satisfy
    \begin{align*} 
    \hat{\rho}_{_T} + \frac{\pi+\arctan(m)}{\omega_D} \le \frac{\rt}{\nu}.
\end{align*}
\end{assumption}

This assumption is necessary to ensure that the defender can return to the target center from the Capture Circle (defined later in \Cref{sec:fullInfo}) before an intruder reaches the target from the TSR boundary. 
Essentially, it guarantees a \textit{reset} of the game: if the defender cannot intercept the current intruder---which, as we will explain later, the defender knows immediately upon the intruder's appearance on the TSR boundary---they return to the target center and allow the current intruder to breach the target. The next game then begins with the defender positioned at the target center and a new intruder appearing randomly on the TSR boundary.
While this assumption is not necessary for the problem to be well-posed, it does simplify the analysis of an otherwise complex problem.

\subsection{Game Phases}
Each one-on-one game between the defender and an intruder consists of two phases \cite{shishika2021partial, pourghorban2022target}, namely, the \textit{Full Information} and the \textit{Partial Information Phase}.
\par
\textit{Full information phase}: In this phase, both agents can sense each other, i.e, $\|\xa(t)\|<\ro+\rt$ and $\|\xa(t)-\xd(t)\|\le \ra$.
\par
\textit{Partial information phase}: In general, there are two distinct ways this phase can occur. First, when only the defender sees the intruder, i.e., $\|\xa(t)\|<\ro+\rt$ and $\|\xa(t)-\xd(t)\|>\ra$. 
The second case is when the defender comes within the
sensing region of the intruder but the intruder is outside the
TSR. However, based on the assumption described in Section~\ref{sec:assumptions}, the second case will cease to occur.

In the subsequent sections, we will study both the agents' (defender and intruder) strategies in each phase of the game.
\section{Partial Information Phase} \label{sec:partInfo}
In this phase, only the defender sees the intruder and not vice-versa.
We assume that, during this phase, the intruder moves radially toward the target center at maximum speed $\nu$ until it detects the defender, at which point the full information phase begins.
The objective of the defender in this phase is to utilize its sensing advantage to initiate the full information phase in a favorable configuration.
The following discussion will address what constitutes a favorable configuration for the defender and whether such a configuration is attainable.

\begin{figure}
     \centering
    \begin{minipage}[b]{0.21\textwidth} \notag
        \centering
        \fbox{\includegraphics[trim = 200 300 190 270, clip,width=\textwidth]{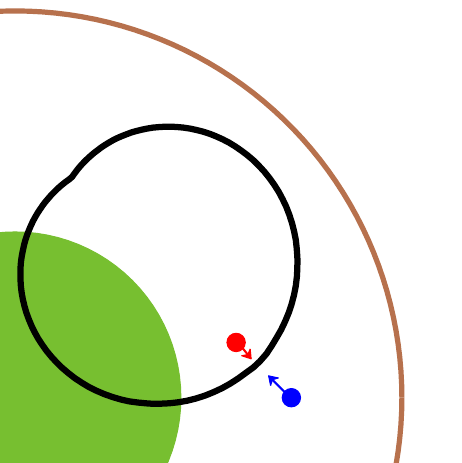} 
                    \put (-58,98) {${\rm Breach \ only}$}
        }
        \label{fig:image1}
    \end{minipage}
    \centering
    \hspace{0.05 cm}
    \begin{minipage}[b]{0.21\textwidth}
        \centering
        \fbox{\includegraphics[trim = 200 300 190 270, clip,width=\textwidth]{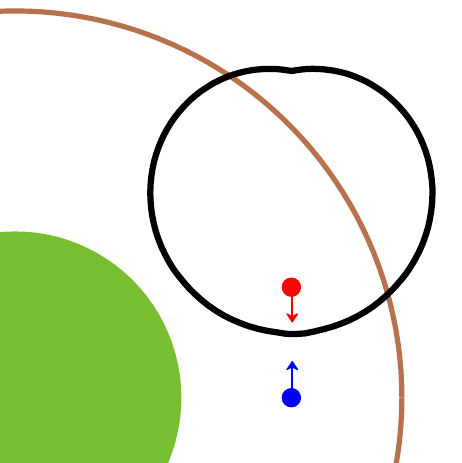}
        \put (-58,98) {${\rm Escape \ only}$}
        } 
        \label{fig:image2}
    \end{minipage}
    \begin{minipage}[b]{0.21\textwidth}
        \centering
        \fbox{\includegraphics[trim = 200 300 190 270, clip,width=\textwidth]{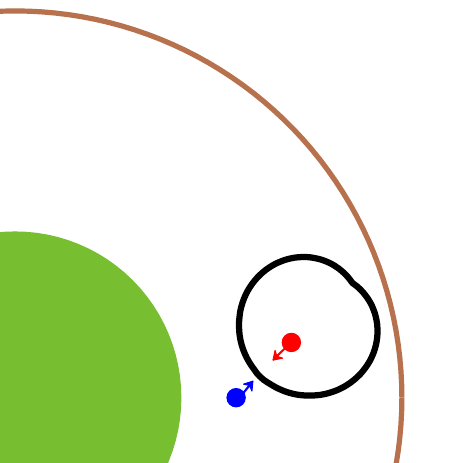} 
        \put (-58,98) {${\rm Capture \ only}$}
        }
        \label{fig:image3}
    \end{minipage}
    \centering
       \hspace{0.05 cm}
    \begin{minipage}[b]{0.21\textwidth}
        \centering
        \fbox{\includegraphics[trim = 200 300 190 270, clip,width=\textwidth]{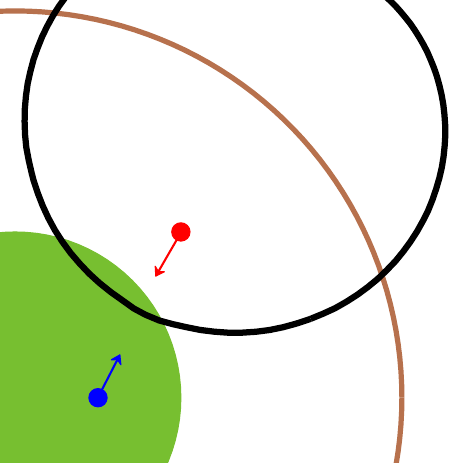}
        \put (-90,98) {${\rm Breach \ or \ Escape}$}
        }
        \label{fig:image4}
    \end{minipage}

    \caption{The blue and red dots represent the positions of the defender and
the intruder, respectively. The black line represents  $\mathcal{C}(\xi_A,\xi_D)$. 
The green region is a part of the circular target
and the brown line represents the target sensing boundary. 
}
    \label{fig:main_scenarios}
\end{figure}

\subsection{Defender Strategy}\label{sec:defender_strategy}
A similar problem involving first-order dynamics (instead of the non-holonomic dynamics \eqref{eq:dynamics}) requires the defender to engage with the intruder at the \textit{right} time and in the \textit{right} place, which necessitates a strategic waiting period by the defender \cite{shishika2021partial}. 
Loosely speaking, this strategy ensures that the initial engagement (i.e., the start of the \textit{full information phase}) does not occur too close to the TSR boundary---where the intruder could escape---or too late in the game, where the resulting configuration leads to an inevitable breach. See Fig.~\ref{fig:main_scenarios} for an illustration of four possible outcomes.

\begin{figure}
    \centering
    \includegraphics[trim = 160 280 145 320, clip, width = 0.4 \textwidth]{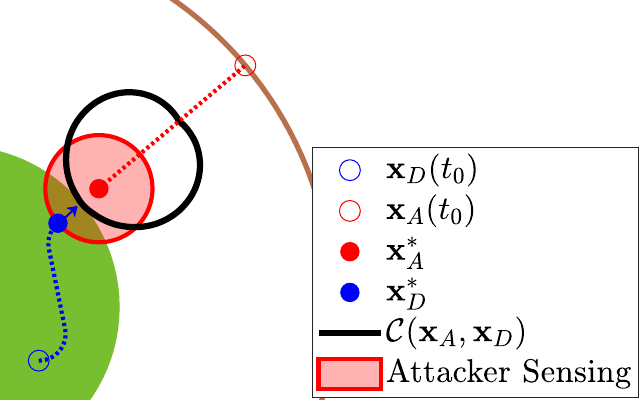}
    \caption{Critical configuration at time $t_A^*$.}
    \label{fig:defender_strategy}
    \vspace{-6 pt}
\end{figure}

Let $r_D^* $ be the smallest radial position of the defender such that $\mathcal{C}(\xi_A,\xi_D)$ does not intersect with the target when a head-on engagement occurs between the defender and the intruder, see Fig.~\ref{fig:defender_strategy}.
More formally, consider an intruder appearing on the TSR boundary at an angle $\varphi$ which detects the defender in a head-on engagement configuration at time~$t$. 
At the moment of detection, the intruder will be at the location $r_A(t)[\cos\varphi, \sin\varphi]^\intercal$, where $r_A(t) = \ro+\rt - \nu t$.\footnote{
Recall that the intruder moves radially toward the target center until it detects the intruder. Therefore, $\psi_A=\varphi \pm \pi$.
} 
The head-on engagement along with the intruder being able to detect the defender requires the defender to be exactly at the location $\xd(t) = (r_A(t)-\ra)[\cos\varphi, \sin\varphi]^\intercal$, with heading angle $\psi_D(t) =\varphi$. Consequently, the defender's radial location is $r_D(t)=\|\xd(t)\|=\ro+\rt-\ra-\nu t$. 
The dominance region of the intruder at this configuration, $\mathcal{C}(\xi_A(t), \xi_D(t))$, depends on only on $t$, since all the other quantities the fixed for a given $\varphi$. 
The smallest radial local $r_D^*$ is defined to be the solution to the following optimization problem:
\begin{align} \label{eq:optimization}
\begin{split}
    \inf_t~\qquad &r_D(t) \\
    \text{subject to   }\quad & \mathcal{C}(\xi_A(t), \xi_D(t)) \cap \mathbf{R}_T = \emptyset, \\
    & r_D(t) = \ro+\rt-\ra - \nu t.
\end{split}    
\end{align}

\begin{remark}
    The set of all feasible solutions to \eqref{eq:optimization}, or equivalently, the corresponding set of radial locations $r_D$'s, characterizes all possible head-on engagement locations. 
    It is noteworthy that the shape and size of $\mathcal{C}(\xi_A(t), \xi_D(t))$ remain unchanged as $t$ varies, except that the set shifts closer to the target center as $t$ increases.
    When the engagement occurs at the optimal time $t^*$, the capture takes place closer to the target center. 
    This proximity increases the likelihood of a successful capture in the next round \cite{pourghorban2022target}, which motivates the use of $t^*$ in designing the defender strategy as discussed later.
    Although it may be possible to engage earlier than $t^*$, such engagements may not be optimal for the remainder of the incoming sequence. This leads to the emergence of a strategic waiting behavior---similar to what was previously observed in \cite{shishika2021partial} and \cite{pourghorban2022target}.  \hfill $\blacksquare$
\end{remark}

\begin{remark}
    Because of the circular game region, we note that $r_D^*:=r_D(t^*)$ is independent of the intruder's arrival angle $\varphi$; therefore, the above optimization needs to be solved only once, rather than each time a new intruder appears. 
    This significantly reduces the computation.~\hfill$\blacksquare$
\end{remark}

As soon as an intruder appears on~the TSR, the defender computes the engagement location as $\xd^* = r_D^*[\cos\varphi, \sin\varphi]^\intercal$. 
Using the Dubins shortest path, the defender also calculates the time required to reach the head-on engagement configuration $\xi_D^* = [{\xd^*}^\intercal, \varphi]^\intercal$. 
If this time exceeds the time the intruder takes to reach the corresponding engagement location $\xa^* = (r_D^* + \ra)[\cos\varphi, \sin\varphi]^\intercal$, which is $t_A^* = \frac{\ro+\rt - r_D^* -\ra}{\nu}$, the defender chooses not to engage and allows the intruder to breach the target.
In that case, the defender repositions itself to maximize the capture probability for the next intruder.
Otherwise, the defender follows the Dubins shortest path to reach the configuration $\xi_D^*$; see the blue dotted path in Fig.~\ref{fig:defender_strategy}.

Now, given a defender configuration $\xi_D$ and an intruder arrival angle $\varphi$, we let $\tau(\xi_D, \varphi)$ denote the time taken by the defender to reach the  engagement configuration $\xi_D^*$. 
Let us further define the set of arrival angles $\varphi$ for which the defender can reach the engagement condition as follows:
\begin{align*}
    \theta_{G}(\xi_D) = \{ \varphi~|~ \tau(\xi_D, \varphi) \le \frac{\ro+\rt - \ra -r_D^*}{\nu}~~\}.
\end{align*}
This set, $\theta_G$, is referred to as the \textit{guarding arc}, as any intruder appearing on the TSR with an angle in this set will be captured---as will be shown in the next section---while any intruder appearing from outside this set will successfully breach the target. 
An illustration of the guarding arc corresponding to a defender configuration $\xi_D(t)$ at time $t$ is shown as the magenta arc  in Fig.~\ref{fig:final_capture} (subplot on the right).

To summarize the defender's strategy: at a given time $t$, when a new intruder appears on the TSR boundary, the defender immediately determines whether it can reach $\xi_D^*$---in which case the full information phase will occur---or whether it should return to the target center to maximize the probability of capturing the next intruder. 
Consequently, the defender either follows the Dubins path to the configuration $\xi_D^*$, or returns to the target center with an unconstrained final heading. 
The final heading is left unconstrained at the target center because the next intruder appears on the TSR boundary with uniform probability, making all defender headings equally effective.

\section{Full Information Phase} \label{sec:fullInfo}
As discussed in the previous section, if an intruder appears from the set  $\theta_G$, full information phase will occur. 
In this section, we describe the tactical motion strategies for both agents during the full information phase.
 \begin{figure}
     \centering
         { \includegraphics[trim = 207 290 187 255, clip,width=0.23\textwidth]{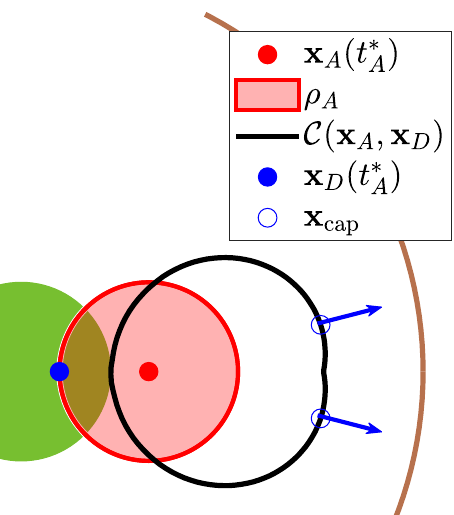}}
     \centering
    { \includegraphics[trim = 207 275 175 275, clip,width=0.23\textwidth]{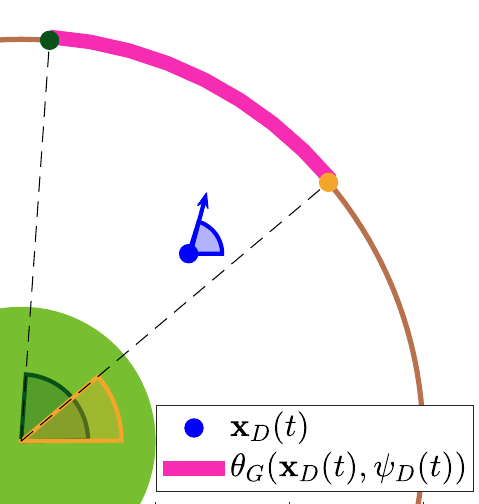}}
          \put (-95,55) {$\xd(t)$}
       \put (-65,72) {$\psi_D(t)$}
      \caption{Left:At time $t_A^*$, the set $\mathcal{C}(\xi_A,\xi_D)$ is shown. Capture location $\x_{\rm{cap}}$ and corresponding heading angle $\psi_{\rm{cap}}$ are shown with blue color.  Right: For a blue point defender $\xd(t)$, the
guarding arc $\theta_G(\xd(t), \psi_D)$ is shown with pink color. The angles $L(\xd(t), \psi_D(t))$ and $U(\xd(t), \psi_D(t))$ are shown with yellow and dark green colors, respectively. } 
      \label{fig:final_capture}
    \end{figure}
\subsection{Agent Strategies}
Recall that the defender only engages with an intruder only if it can guarantee a capture.
This is due to the constraint $\mathcal{C}(\xi_A(t), \xi_D(t)) \cap \mathbf{R}_T = \emptyset$ in \eqref{eq:optimization}. 
Given the sequential nature of the game, the intruder's objective is to minimize the defender's probability of capturing the next intruder.
Recall that the intruders' arrival follows a uniform and independent stochastic process. 
Therefore, the probability of capturing the next intruder is $|\theta_G(\x_{\rm{cap}},\psi_{\rm{cap}})|$, where $|\cdot|$ denotes the length of the interval, and $\x_{\rm{cap}} \in \mathcal{C}(\xi_A,\xi_D)$  is the location where the current intruder will get captured and $\psi_{\rm{cap}}$ is the heading angle of the defender at the capture time, see Fig.~\ref{fig:final_capture} (left).

Recall from the definition of $\mathcal{C}(\xi_A,\xi_D)$ that, for every point $\x$ on the boundary of $\mathcal{C}(\xi_A,\xi_D)$, there exists a Dubins path for the defender (and the intruder) that reaches the point $\x$ in the shortest time and the two agents meet. 
The final heading angle of the defender depends on the point $\x$, and hereafter denoted as $\psi(\x)$. 
The intruder's objective can be expressed as
\begin{align} \label{eq:intruder_optimization}
    \min_{\x \in \partial\mathcal{C}(\xi_A^*,\xi_D^*)} |\theta_G(\x, \psi(\x))|, 
\end{align}
and the minimizer of this optimization is the optimal capture point $\x_{\rm{cap}}$. 
Due to the symmetry of the shape $\mathcal{C}(\xi_A,\xi_D)$, there might be two solution for $\x_{\rm{cap}}$, see Fig.~\ref{fig:final_capture}. 
In this work, we assume that both players agree on a point. 
For practical purposes, one may consider that the defender will wait for an (infinitesimal small) amount of time to observe whether the intruder is moving clockwise or counter-clockwise to determine which of the two $\x_{\rm{cap}}$ locations is chosen by the intruder.


Due to the circular symmetry of the problem, one may verify that $\| \x_{{\rm cap}}\|$ is independent of the intruder arrival angle $\varphi$, and that $|\psi(\x_{\rm cap}) - \varphi|$ is a constant that is independent of $\varphi$ and only depends on the problem parameters. 
The circle concentric with the target with radius $\| \x_{{\rm cap}}\|$ is hereafter referred to as the \textit{Capture Circle} since all the captures occur on this circle. 
This is because, starting from this circle, the defender will again perform a head-on engagement with the next incoming intruder and thus will again end up on this circle---a phenomenon that has been observed in other optimal  defender strategies for sequential arrival \cite{pourghorban2022target}.

Starting from the capture circle, if the defender is unable to start a successful head-on capture, it moves to the target center. 
Then, next time, it attempts a head-on engagement again with the next intruder. 
If successful, the defender will again end up on the capture circle. 

\begin{remark}
    Notice that a natural min-max optimization arises in this setting: the intruder seeks to minimize the capture probability for the next intruder by solving the minimization problem in \eqref{eq:intruder_optimization}, while the defender aims to maximize it by selecting the engagement point $\xi_D^*$ (or equivalently, $t^*$ from \eqref{eq:optimization}).  
    Once the full information phase begins, the final outcome is determined by the intruder, since $\mathcal{C}(\xi_A,\xi_D)$ represents the intruder’s dominance region, and it can reach any point within this set without being captured. 
    On the other hand, the defender leverages its information advantage in the partial information phase to pick the optimal engagement configuration for the full information phase.~\hfill $\blacksquare$ 
\end{remark}


\section{Analysis of the Game} \label{sec:GameAnalysis}
The first game starts with the defender being at the target center. The defender is able to capture the intruders arriving from angle $\varphi$ if $\varphi \in \theta_G(\xi_D(0))$. Otherwise, the defender remains at the target center untill the next intruder appears. 
If the defender captures an intruder it will be located on the \textit{Capture Circle}.
At the end of each game, it will either be on the capture circle or at the target center depending on the arrival angles of the intruders.  The defender strategy (and the progression of the game) is described by Algorithm~\ref{euclid}.

\begin{algorithm} 
    \caption{Defender's Strategy}
    \label{euclid}
    \begin{algorithmic}[1] 
    \State Initialize $\xd \gets [0,0]^\intercal$, $\psi_D \gets 0$, $N_{\rm capture}\gets 0$,  and $N$
    \For{$n = 1: N$}
    \State $\varphi \sim {\mathcal U}(-\pi, \pi)$ 
    \Comment{Uniform random arrival of intruder}
    \State $\psi_A \gets \varphi + \pi$
    \If{$\varphi \in \theta_G(\xi_D)$} \Comment{{\color{blue!60}Capture happens}}
    \State $\xd^* \gets r_D^*[\cos\varphi, \sin\varphi]^\intercal,\qquad \psi_D^* \gets \varphi$
    \State Defender follows Dubins path to reach $[{\xd^*}^\intercal, \psi_D^*]$
    \State $(\x_{{\rm cap}}, \psi(\x_{{\rm cap}})) \gets $ solution to \eqref{eq:intruder_optimization}. 
    \State $\xi_D \gets (\x_{{\rm cap}}, \psi(\x_{{\rm cap}}))$   \Comment{$\xi_D$ after capture}
    \State $N_{\rm capture} \gets N_{\rm capture} +1$
    \Else \Comment{{\color{red!60} Breach happens}}
    \State Defender goes to/stays at the target center 
    \State $\xd \gets [0,0]^\intercal$
    \EndIf
    \EndFor
    \end{algorithmic}
\end{algorithm}
\subsection{Probability of a Single Capture}
As discussed earlier, a defender staring from the configuration $\xi_D=[\xd^\intercal, \psi_D]^\intercal$ can only capture intruders appearing from angles in the set $\theta_G(\xi_D)$. 
Due to the circular symmetry of the problem, one may verify that $\theta_G$ only depends on $\|\xd\|$ and $\psi_D - \angle\xd$; see Fig.~\ref{fig:final_capture} (subplot on the right).
For a defender located on the target center, one may further verify---using the circular symmetry---that the size of the set $\theta_G$ (i.e., $|\theta_G|$) is independent of the defender's heading angle $\psi_D$. 
Consequently, for a defender located on the target center, we denote the capture probability as $p_1^* = \frac{|\theta_G([0,0]^\intercal,0)|}{2\pi}$. 

Now, consider a defender who has just captured an intruder and is starting the next game.
Recall that after capturing an intruder, the defender ends up in the configuration $[\x_{\rm cap}^\intercal, \psi(\x_{\rm cap})]^\intercal$. 
One may verify---again using the circular symmetry---that $\angle\x_{\rm cap} - \psi(\x_{\rm cap})$ is constant and independent of both $\x_{\rm cap}$ and the arrival angle of the intruder that was captured.
Furthermore, as discussed in \Cref{sec:fullInfo}, that $\|\x_{\rm cap}\|$ is also constant and independent of the intruder's arrival angle; in fact, it defines the radius of the {capture circle}.  
Putting all of this together, we observe that $|\theta_G(\|\x_{\rm cap}\|,~\angle\x_{\rm cap} - \psi(\x_{\rm cap}) )|$  is independent of the arrival angle of the captured intruder and is therefore constant.
Consequently, the capture probability for a defender who has just captured an intruder becomes $p_2^* = \frac{|\theta_G(\|\x_{\rm cap}\|,~\angle\x_{\rm cap} - \psi(\x_{\rm cap}) )|}{2\pi}$.
Formally, we define\footnote{With slight abuse of notation, we write $\theta_G(\xi_D)=\theta_G(\|\xd\|, \psi_D-\angle\xd)$.} 
\begin{subequations} \label{eq:p*}
   \begin{align} 
   &p_1^* \triangleq  \frac{|\theta_G(0,0)|}{2\pi}, \\
   &p_2^* \triangleq \frac{|~\theta_G(r_{\rm cap},~\theta_{\rm cap})~|}{2\pi},
\end{align} 
\end{subequations}
where $r_{\rm cap}$ is the radius of the capture circle and $\theta_{\rm cap} \triangleq  \angle\x_{\rm cap} - \psi(\x_{\rm cap})$.

In summary, if a defender is at the target center, it can capture the next intruder with probability $p_1^*$; if it is on the Capture circle, the capture probability becomes $p_2^*$.

\subsection{Capture Percentage for a Sequence of Incoming intruders}
Since the defender starts the game from the target center, the first intruder gets captured with probability $p_1^*$. Therefore, after the first game, the defender is on the \textit{capture circle} with probability $p_1^*$ or remains at the target center with probability $1- p_1^*$.
Starting on the \textit{capture circle}, the defender remains on the \textit{capture circle} with probability $p_2^*$ by capturing the next intruder. 
Otherwise, with probability  $1-p_2^*$, the intruder breaches the target and the defender moves to the target center to \textit{restart} the game. 
This problem can be abstracted into a two-state Markov chain (see, Fig.~\ref{fig:markovfig}) with state $S_1$ denoting  the defender being at the target center and $S_2$ denoting the defender being on the \textit{capture circle}. 
This Markov-chain abstraction aids the computation of the capture percentages. 

\begin{figure}
\centering
	\begin{tikzpicture}[->, >=stealth', auto, semithick, node distance=3cm]
	\tikzstyle{every state}=[fill=white,draw=black,thick,text=black,scale=1]
	\node[state]    (A)                     {$S_1$};
	\node[state]    (C)[right of=A]   {$S_2$};
	\path
	(A) edge[loop left]			node{$p_1^*$}	(A)
  edge[bend left,above]	node{$1-p_1^*$}	(C)
	(C) edge[bend left,below]		node{$1-p_2^*$}	(A)
	 edge[loop right]		node{$p_2^*$}	(C);
	\end{tikzpicture}
    \caption{The Markov Chain of the defender's state in the game and transition between the states are presented.}
    \label{fig:markovfig}
\end{figure}
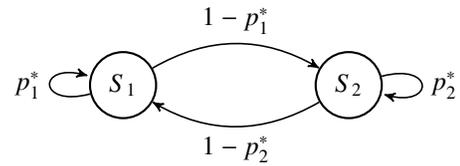
\begin{lemma}
    The expected capture percentage at the end of the $n$-th game is
    \begin{align}
    \label{eq:percentagen}
        {\rm percentage}(n)= \frac{\sum_{i=1}^n \p^\intercal \eta_i}{n}\times 100,
    \end{align}
    where $\eta_i = [\eta_i(1), \eta_i(2)]^\intercal$ with $\eta_i(j)$ denoting the probability that the state of the Markov chain is at $S_j$ at the end of the $i$-th 1-vs-1 game.
\end{lemma}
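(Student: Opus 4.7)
The plan is to evaluate the expected capture percentage directly from its definition using linearity of expectation, and then identify each per-game capture probability with $\p^\intercal \eta_i$ using the two-state Markov-chain abstraction already established in Fig.~\ref{fig:markovfig}.

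First, I would introduce indicator random variables $X_i \in \{0,1\}$ with $X_i=1$ exactly when the defender captures the $i$-th intruder. Then $N_{\rm capture}(n)=\sum_{i=1}^n X_i$, so by the definition used throughout the text, ${\rm percentage}(n) = \frac{100}{n}\sum_{i=1}^n X_i$. Taking expectations and applying linearity yields
\begin{align*}
    \EX[{\rm percentage}(n)] \;=\; \frac{100}{n}\sum_{i=1}^n \Pr[X_i=1].
\end{align*}

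Next, I would relate $\Pr[X_i=1]$ to $\eta_i$. By the construction of the Markov chain, the defender sits in $S_2$ at the end of game $i$ precisely when it captured the $i$-th intruder, since capture places the defender on the capture circle while a breach returns it to the target center. Hence $\Pr[X_i=1] = \eta_i(2)$, which can be expressed as $\p^\intercal \eta_i$ for the appropriate vector $\p$ (either the direct selector $[0,1]^\intercal$, or $[p_1^*,p_2^*]^\intercal$ under the convention in which $\eta_i$ represents the state distribution immediately prior to game $i$; the law of total probability in the latter case gives the same scalar). Plugging this back into the displayed equation produces the claimed identity.

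The only substantive point to verify---and the closest thing to an ``obstacle''---is that the two-state abstraction faithfully represents the game, i.e., the probability of capture in game $i$ depends on the history only through whether the defender currently sits at the target center or on the capture circle. This has already been established in \Cref{sec:fullInfo}: the capture-circle radius $r_{\rm cap}$ and the post-capture heading offset $\theta_{\rm cap}$ are independent of the intruder's arrival angle $\varphi$, and arrivals are i.i.d.\ uniform on the TSR boundary, so the Markov property holds with the transition probabilities $p_1^*,p_2^*$ defined in \eqref{eq:p*}. Granting this structural observation, the lemma is an immediate consequence of linearity of expectation and presents no further technical difficulty.
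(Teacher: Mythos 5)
Your proposal is correct and takes essentially the same route as the paper: the paper writes the recursion $c_{n+1}=c_n+\mu_1\mathds{1}_{S_1}(s_n)+\mu_2\mathds{1}_{S_2}(s_n)$ and unrolls it after taking expectations, which is just your indicator-sum decomposition $\EX[c_n]=\sum_i\Pr[X_i=1]$ written recursively, with the same appeal to independence of arrivals from the chain state. Your parenthetical about the two indexing conventions (end-of-game-$i$ distribution with selector $[0,1]^\intercal$ versus start-of-game-$i$ distribution with $\p=[p_1^*,p_2^*]^\intercal$) is in fact handled more carefully than in the paper, whose stated convention and summation limits are slightly inconsistent on this point.
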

\begin{proof}
    Let $c_n$ denote the total number of captures by the end of the $n$-th game. Let us further define the random variable $s_n \in \{S_1, S_2\}$ to denote the state of the Markov chain  at time $n$; see Fig.~\ref{fig:markovfig}. 
    Let us define the binary random variables $\mu_1$ and $\mu_2$ such that 
    \begin{align*}
        \mu_i = \begin{cases}
            1, \qquad &\text{Attacker is captured from the game state } S_i,\\
            0, &\text{Attacker is not captured }
        \end{cases}
    \end{align*}
    Therefore, we may write
    \begin{align}
    \label{eq:w_n}
          c_{n+1} = c_n + \mu_1 \mathds{1}_{S_1}(s_n) + \mu_2 \mathds{1}_{S_2}(s_n), 
    \end{align}
    where $\mathds{1}_s(\cdot)$ is an indicator function such that $\mathds{1}_s(s') = 1$ if and only if $s=s'$; otherwise, $\mathds{1}_s(s') = 0$.

    Notice that, from the Markov chain in Fig.~\ref{fig:markovfig}, we have $\mathbb P(\mu_i = 1) = p_i^*$. 
    Furthermore, since the attackers appear independently of the game state state, $\mu_i$ and $s_n$ are independent random variables. 
    Taking expectations on both sides of \eqref{eq:w_n}, we obtain
    \begin{align} \label{eq:expected_w_n}
    \varmathbb{E}[c_{n+1}] = \varmathbb{E}[c_n] + p_1^* \eta_{n}(1) +  p_2^* \eta_n(2),
\end{align}
where $\eta_n(j)$ denotes the probability that the state of the Markov chain $(s_n)$ is at $S_j$ at the end of the $n$-th game. By defining the vectors $\p=[p_1^*, p_2^*]^\intercal$ and $\eta_n=[\eta_n(1),\eta_n(2)]^\intercal$, we may rewrite \eqref{eq:expected_w_n} as
\begin{align}
    \varmathbb{E}[c_{n+1}] = \varmathbb{E}[c_n] + \p^\intercal \eta_n=\varmathbb{E}[c_0]+\sum_{i=1}^n \p^\intercal \eta_i = \!\sum_{i=1}^n \p^\intercal \eta_i,
\end{align}
where $c_0=0$. Therefore, the expected percentage of capture at the end of the $n$-th game is,
    \begin{align*}
    \label{percentagen}
        {\rm percentage}(n)=\frac{\varmathbb{E}[c_n]}{n}\times 100= \frac{\sum_{i=1}^n \p^\intercal \eta_i}{n}\times 100.
    \end{align*}
    This completes the proof.
 \end{proof}
 \par 
 The asymptotic capture percentage, when $ n \rightarrow \infty$ is
\begin{align}
    {\rm percentage}(\infty) = \p^\intercal \eta^* \times 100,
\end{align}
where $\eta^*$ is the stationary distribution of the Markov Chain.

\begin{remark}
    Using the Theory of Markov Chain one may verify that 
    \begin{align*}
        \eta^* =\begin{bmatrix}
\frac{1-p_2^*}{1+p_1^*-p_2^*} & \frac{p_1^*}{1+p_1^*-p_2^*}  
\end{bmatrix}^\intercal, \ \ \eta_i =\begin{bmatrix}
1-p_1^* & 1-p_2^*  \\
p_1^* & p_2^* 
\end{bmatrix}^{i} \begin{bmatrix}
1 \\ 0  
\end{bmatrix},~~\forall ~ i. 
    \end{align*}
\end{remark}

\vspace{0.5 cm}
\section{Simulation Results} \label{sec:Simu}
We simulate the game with the following parameters $\ro =10$, $\ra = 3,~ \rt = 20$, $\nu = 0.8 $, $\omega_D=0.5$, and $\omega_A=1.5$. We considered $3$ intruder arrivals in this experiment. The first game starts when then the defender is located in the target center and $\psi_D=0$. The first intruder appears on the TSR boundary from $\varphi=1.1$rad.
The defender engages with this intruder at $\xd^*=[3.77, 7.42]$ and captures it at time $t=30.45 s$ at $\x_{{\rm cap}}=[5.06,14.58]$ with heading angle $\psi_D=1.41$ radian. 
The second intruder appears from  $\varphi=0.2$.
The defender engages with the second intruder at $\xd^*=[8.18, 1.65]$ and captures it at time $t=60.90 s$ at $\x_{{\rm cap}}=[14.57,5.09]$ with heading angle $\psi_D=0.51$ radian. The third intruder appears from  $\varphi=-2$. 
The defender is unable to capture this intruder and consequently moves toward the target center. The third intruder breaches the target at $[-4.16,-9.09]$.

\subsection{Monte-Carlo Experiment}
In this experiment we observe the outcome for a sequence of 200 arrivals.
Furthermore, as the arrival patter is random, we conducted $100$ random trials of the game. 
The game parameters remain the same as before.
The percentage of capture for each trial is plotted in Fig.~\ref{fig:percentageTrial} using a gray line.
The abscissa in this figure denotes the number of arrivals ($n$) and the ordinate denotes the percentage of capture for that number of arrivals.
We compute the empirical mean of the percentage of capture from these 100 random trials, and that is shown by the  magenta line in Fig.~\ref{fig:percentageTrial}.
To compare this simulation result with our theoretical analysis, we plot the expected percentage (i.e., ${\rm percentage}(n) $ from \eqref{eq:percentagen}) in the same figure using the 
cyan line. %
We observe that the empirical mean is very close to the theoretically predicted quantity.
In this plot, we also report the value of the asymptotic capture percentage (i.e., ${\rm percentage}(\infty)$), and we notice that the random trials and ${\rm percentage}(n)$ converge ($\sim$~exponentially) to ${\rm percentage}(\infty)$ as $n$ increases.
\begin{figure}
    \centering
    \includegraphics[trim = 200 292 180 280, clip, width = 0.3 \textwidth]{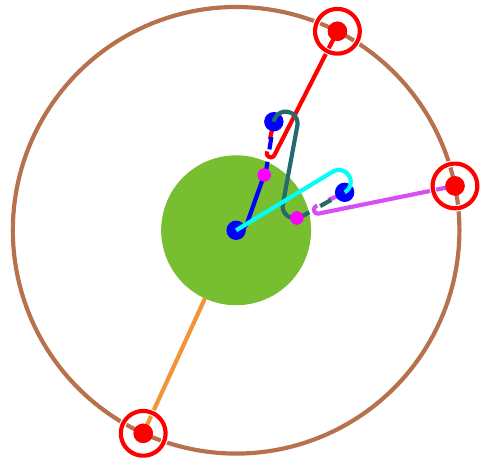}
    \put (-100,77) {\footnotesize{$t_0=0$}}
    \put (-35,133) {\footnotesize{$t_0=0$}}
    \put (-106,87) {\footnotesize{$t_A^*=23.33$}}
    \put (-109,104) {\footnotesize{$t_{\rm cap}=30.45$}}
    \put (2,84) {\footnotesize{$t_0=30.45$}}
    \put (-68,67) {\footnotesize{$t_A^*=53.78$}}
        \put (-51,94) {\footnotesize{$t_{\rm cap}=60.90$}}
        \put (-150,5) {\footnotesize{$t_0=60.90$}}
    \caption{The trajectories of the players for 3 intruder arrivals are presented. The trajectory of the defender is shown with blue, teal, and cyan lines for the arrivals, respectively. The trajectories of the intruders are shown with red, purple, and orange lines for the arrivals, respectively.}
    \label{fig:Simulation}
\end{figure}
\begin{figure}
    \centering
    \includegraphics[trim = 100 241 125 255, clip, width = 0.41 \textwidth]{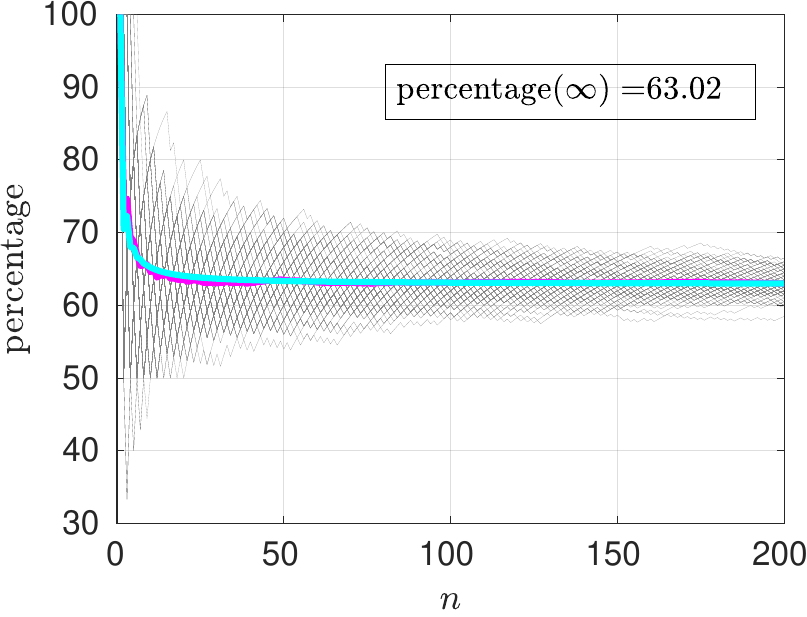}
    \caption{Percentage of Capture versus number of arrivals from 100 trials. Each trial is represented with a gray line. 
    Their empirical average is represented by the magenta line.
    The cyan line represents the theoretically predicted capture percentage in \eqref{eq:percentagen}.}
    \label{fig:percentageTrial}
\end{figure}
\section{Conclusion} \label{sec:Conclusion}
In this paper, we formulated a target defense game against
a sequence of incoming intruders with Dubins dynamics. 
The Intruders are tasked
to move radially toward the target center and breach the
target while the defender is tasked to capture as many
intruders as possible.  Based on the initial configuration of the agents and the information
available to them, each game can be divided into two phases and we discussed the agents' strategies in each phase using the notions of \textit{Dubnis path} and \textit{Guarding arc}.  We analytically computed the capture percentage for both for finite and infinite sequences of incoming intruders by abstracting the game into a two-state Markov chain.
Numerical studies are presented that corroborate the theoretical findings. 

\bibliographystyle{IEEEtran}
\bibliography{reference}

\end{document}